\numberwithin{equation}{subsection}
\newcommand{\sqsp}{\renewcommand{\baselinestretch}{1.1}\tiny\normalsize}
\newtheorem{theorem}[subsection]{Theorem}
\newtheorem{lemma}[subsection]{Lemma}
\newtheorem{proposition}[subsection]{Proposition}
\newtheorem{corollary}[subsection]{Corollary}
\theoremstyle{definition}
\newtheorem{example}[subsection]{Example}
\newcommand{\Ubar}{\overline{U}}
\newcommand{\br}{\mathfrak{B}_n}
\newcommand{\bk}{\mathbf{k}}
\DeclareMathOperator{\Hom}{Hom}
\DeclareMathOperator{\Aut}{Aut}
\begin{document}

\title{Hom-Yang-Baxter equation, Hom-Lie algebras, and quasi-triangular bialgebras}
\author{Donald Yau}

\begin{abstract}
We study a twisted version of the Yang-Baxter Equation, called the Hom-Yang-Baxter Equation (HYBE), which is motivated by Hom-Lie algebras.  Three classes of solutions of the HYBE are constructed, one from Hom-Lie algebras and the others from Drinfeld's (dual) quasi-triangular bialgebras.  Each solution of the HYBE can be extended to operators that satisfy the braid relations.  Assuming an invertibility condition, these operators give a representation of the braid group.
\end{abstract}

\keywords{Hom-Yang-Baxter Equation, Hom-Lie algebra, quasi-triangular bialgebra, braid group representation.}

\subjclass[2000]{16W30, 17A30, 17B37, 81R50}

\address{Department of Mathematics\\
    The Ohio State University at Newark\\
    1179 University Drive\\
    Newark, OH 43055, USA}
\email{dyau@math.ohio-state.edu}

\date{\today}
\maketitle

\sqsp

%%%%%%%%%%%%%%%%%%%%%%
\section{Introduction}
%%%%%%%%%%%%%%%%%%%%%%

% Yang-Baxter equation: origin, relations to quantum groups and Lie algebras
% quasi-triangular bialgebras and Lie solutions

The Yang-Baxter Equation (YBE) originated in the work of Yang \cite{yang} and Baxter \cite{baxter72,baxter82} in statistical mechanics.  Let $V$ be a vector space, and let $B \colon V \otimes V \to V \otimes V$ be a linear automorphism.  Then $B$ is said to be an \emph{$R$-matrix} if it satisfies the YBE:
\begin{equation}
\label{eq:YBE}
(Id_V \otimes B) \circ (B \otimes Id_V) \circ (Id_V \otimes B) = (B \otimes Id_V) \circ (Id_V \otimes B) \circ (B \otimes Id_V).
\end{equation}
The YBE has far-reaching mathematical significance.  Indeed, it is closely related to Lie algebras, Drinfeld's (dual) quasi-triangular bialgebras, which include many examples of quantum groups, and the braid group, among other topics.  It is known that every (co)module $M$ over a (dual) quasi-triangular bialgebra $H$ gives a solution of the YBE \cite{dri87,dri89,kassel}.  Also, every Lie algebra $L$ gives a solution of the YBE \cite{bc}.  Moreover, each solution of the YBE gives a representation of the braid group on $n$ strands.

% Hom-Lie (q-deformations of Witt and Virasoro, HLie, deformations of Lie via endomorphisms)

We will study a twisted version of the YBE, which is motivated by Hom-Lie algebras.  A \emph{Hom-Lie algebra} $L$ has a bilinear skew-symmetric bracket $[-,-] \colon L \otimes L \to L$ and a linear map $\alpha \colon L \to L$ such that $\alpha[x,y] = [\alpha(x),\alpha(y)]$ for $x,y \in L$ (multiplicativity) and that the following \emph{Hom-Jacobi identity} holds:
\begin{equation}
\label{eq:hom-jacobi}
[[x,y],\alpha(z)] + [[z,x],\alpha(y)] + [[y,z],\alpha(x)]] = 0.
\end{equation}
A Lie algebra is a Hom-Lie algebra with $\alpha = Id$.  Hom-Lie algebras were introduced in \cite{hls} (without multiplicativity) to describe the structures on certain $q$-deformations of the Witt and the Virasoro algebras.  Earlier precursors of Hom-Lie algebras can be found in \cite{hu,liu}.  Other classes of Hom-Lie algebras were constructed in \cite{ms,yau2}.  We will describe some of these Hom-Lie algebras in Section ~\ref{sec:homlie}.

% Motivates Hom-Yang-Baxter

If one considers a Hom-Lie algebra as an $\alpha$-twisted version of a Lie algebra, then there should be a corresponding twisted YBE.  To state precisely what we mean by a twisted YBE, let $M$ be a vector space with a given linear self-map $\alpha \colon M \to M$, and let $B \colon M \otimes M \to M \otimes M$ be a bilinear map (not necessarily an automorphism) such that $B \circ \alpha^{\otimes 2} = \alpha^{\otimes 2} \circ B$.  We consider the following \emph{Hom-Yang-Baxter Equation} (HYBE) for $(M,\alpha)$,
\begin{equation}
\label{eq:HYBE}
(\alpha \otimes B) \circ (B \otimes \alpha) \circ (\alpha \otimes B) = (B \otimes \alpha) \circ (\alpha \otimes B) \circ (B \otimes \alpha).
\end{equation}
Of course, an $R$-matrix is a solution of the HYBE in which $\alpha = Id$ and $B$ is invertible.  We will construct three classes of solutions of the HYBE, generalizing the solutions of the YBE from Lie algebras and (dual) quasi-triangular bialgebras.

% Just as Lie algebras and representations of quasi-triangular bialgebras give solutions of the YBE,
% Hom-Lie and twisted representations of quasi-triangular bialgebras should give solutions of the HYBE
% 3 classes of solutions, generalizing Lie (Baez-Crans) and (dual) quasi-triangular bialgebras

Just as a Lie algebra gives a solution of the YBE \eqref{eq:YBE}, the following result, which will be proved in Section \ref{sec:homlie}, shows that a Hom-Lie algebra gives a solution of the HYBE.  In what follows, $\bk$ denotes the ground field.

\begin{theorem}
\label{thm:homlie}
Let $(L,[-,-],\alpha)$ be a Hom-Lie algebra.  Set $L' = \bk \oplus L$ and $\alpha(a,x) = (a,\alpha(x))$ for $(a,x) \in L'$.  Define a bilinear map $B_\alpha \colon L' \otimes L' \to L' \otimes L'$ by
\begin{equation}
\label{eq:Balpha}
B_\alpha((a,x) \otimes (b,y)) = (b,\alpha(y)) \otimes (a,\alpha(x)) + (1,0) \otimes (0,[x,y]).
\end{equation}
Then $B_\alpha$ is a solution of the HYBE \eqref{eq:HYBE} for $(L',\alpha)$.
\end{theorem}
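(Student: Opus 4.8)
The plan is to split $B_\alpha$ into a ``flip part'' and a ``bracket part'' and to expand both sides of \eqref{eq:HYBE} accordingly. Write $T \colon L' \otimes L' \to L' \otimes L'$ for the $\alpha$-twisted flip $T(u \otimes v) = \alpha v \otimes \alpha u$, and $C(u \otimes v) = (1,0) \otimes (0,[x,y])$ for $u = (a,x)$ and $v = (b,y)$, so that $B_\alpha = T + C$ by \eqref{eq:Balpha}. First I would dispose of the standing compatibility requirement $B_\alpha \circ \alpha^{\otimes 2} = \alpha^{\otimes 2} \circ B_\alpha$: both composites send $u \otimes v$ to $\alpha^2 v \otimes \alpha^2 u$ plus a bracket term, and the two bracket terms coincide exactly because of the multiplicativity $\alpha[x,y] = [\alpha x, \alpha y]$, while $T$ on its own visibly commutes with $\alpha^{\otimes 2}$.

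For the HYBE itself I would abbreviate the six operators on $L'^{\otimes 3}$ occurring in \eqref{eq:HYBE} by $T_1 = T \otimes \alpha$, $T_2 = \alpha \otimes T$, $C_1 = C \otimes \alpha$, and $C_2 = \alpha \otimes C$, so that the two sides become $(T_2 + C_2)(T_1 + C_1)(T_2 + C_2)$ and $(T_1 + C_1)(T_2 + C_2)(T_1 + C_1)$. Expanding, each side is a sum of eight terms, which I would sort by the number of $C$-factors. The pure-$T$ terms $T_2 T_1 T_2$ and $T_1 T_2 T_1$ both send $u \otimes v \otimes w$ to $\alpha^3 w \otimes \alpha^3 v \otimes \alpha^3 u$ --- the ordinary braid relation for the flip, undisturbed by the twist --- so they agree. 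The structural fact driving the rest is that $C_1 \circ C_2 = 0$ identically: $C_2$ puts $(1,0)$, whose $L$-component is zero, into the middle tensor slot, and the $C$ inside $C_1$ then brackets against that zero. By the same ``$C$ reads a zero'' mechanism, several further $C$-heavy terms vanish: the two pure-$C$ terms $C_2 C_1 C_2$ and $C_1 C_2 C_1$, and three of the six two-$C$ terms.

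What remains is to match the terms with exactly one or exactly two $C$-factors. Each one-$C$ term produces an element carrying a single $(1,0)$ in one slot, the $\alpha^2$-image of one of the brackets $[x,y]$, $[y,z]$, $[x,z]$ in a second slot, and an $\alpha^3$-image in the third; a short direct computation --- invoking multiplicativity only to pull $\alpha$'s through brackets --- shows that, after reordering, the three one-$C$ terms on the left coincide with the three on the right. For the two-$C$ terms, only one survives on the right, namely $(1,0) \otimes (1,0) \otimes (0, \alpha[[x,y],\alpha z])$, whereas the two survivors on the left sum to $(1,0) \otimes (1,0) \otimes (0, \alpha([\alpha x,[y,z]] + [[x,z],\alpha y]))$; applying skew-symmetry and the Hom-Jacobi identity \eqref{eq:hom-jacobi} rewrites the inner sum as $[[x,y],\alpha z]$, and the two sides again agree. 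Having identified all eight terms on each side, \eqref{eq:HYBE} follows.

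I expect the two-$C$ terms to be the only real obstacle: one must keep straight which of the six two-$C$ triple composites vanish, get the powers of $\alpha$ and the nesting order of the iterated brackets right, and then recognize the surviving combination as an instance of the Hom-Jacobi identity. This is the single point where a Hom-Lie axiom beyond multiplicativity is used, and the single point where a sign or indexing slip would be fatal; once the decomposition $B_\alpha = T + C$ and the identity $C_1 \circ C_2 = 0$ are in place, the rest is routine bookkeeping.
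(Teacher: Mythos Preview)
Your proposal is correct and follows essentially the same approach as the paper: a direct expansion of both sides of \eqref{eq:HYBE} on a generic element, followed by term-by-term matching, with the residual iterated-bracket terms reducing precisely to $\alpha$ applied to the Hom--Jacobi identity. Your decomposition $B_\alpha = T + C$ and the observation $C_1 \circ C_2 = 0$ are a tidy way to organize the bookkeeping and explain \emph{why} certain terms drop out, but the surviving terms, their identification via multiplicativity, and the final appeal to \eqref{eq:hom-jacobi} are exactly those appearing in the paper's computation \eqref{AF}--\eqref{eq:alphaHJ'}.
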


This Theorem is a generalization of \cite[Proposition 4.2.2]{bc}, which is precisely the case when $\alpha = Id$, i.e., when $L$ is a Lie algebra.

Next we describe solutions of the HYBE from quasi-triangular bialgebras.  A quasi-triangular bialgebra \cite{dri87,dri89} consists of a bialgebra $H$ and an invertible element $R \in H \otimes H$, called the \emph{quasi-triangular structure}.  The comultiplication $\Delta$ in $H$ is almost cocommutative, whose non-cocommutativity is controlled by the element $R$.  Moreover, $R$ satisfies two further compatibility conditions with $\Delta$.  A cocommutative bialgebra is an example of a quasi-triangular bialgebra in which $R = 1 \otimes 1$.  However, most interesting examples of quasi-triangular bialgebras are not cocommutative.  The exact definition of a quasi-triangular bialgebra will be recalled in Section ~\ref{sec:qtb}.

Let $(H,R)$ be a quasi-triangular bialgebra, and let $M$ be an $H$-module.  Define the bilinear map $B_R \colon M \otimes M \to M \otimes M$ by
\begin{equation}
\label{eq:BR}
B_R(u \otimes v) = \tau(R(u \otimes v)),
\end{equation}
where $\tau \colon M \otimes M \to M \otimes M$ is the twist isomorphism.  Then it is well-known that $B_R$ is a solution of the YBE \eqref{eq:YBE} \cite{dri87,dri89,kassel}.  This gives an efficient and systematic way to produce solutions of the YBE.  Particular examples of solutions of the YBE arising this way include the Woronowicz operators on a Hopf algebra \cite{woronowicz}, as shown in \cite{hennings}.

The following generalization will be proved in Section ~\ref{sec:qtb}.

\begin{theorem}
\label{thm:qtb}
Let $(H,R)$ be a quasi-triangular bialgebra, $M$ be an $H$-module, and $\alpha \colon M \to M$ be an $H$-module morphism.  Then the map $B_R$ \eqref{eq:BR} is a solution of the HYBE \eqref{eq:HYBE} for $(M,\alpha)$.
\end{theorem}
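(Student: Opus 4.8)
The plan is to reduce the HYBE for $(M,\alpha)$ to the classical Yang-Baxter equation \eqref{eq:YBE} for $B_R$, which we may take as known (it is recalled just above the statement, with references \cite{dri87,dri89,kassel}). Two preliminary points are needed. First, $B_R$ and $\alpha$ do satisfy the compatibility $B_R\circ\alpha^{\otimes 2}=\alpha^{\otimes 2}\circ B_R$ demanded by the definition of the HYBE: since $\alpha$ is $H$-linear, the action of $R$ on $M\otimes M$ commutes with $\alpha^{\otimes 2}$, and $\tau$ commutes with $\alpha^{\otimes 2}$ as well, so $B_R=\tau\circ R$ does too. Second --- and this is the observation that drives everything --- because $\alpha$ commutes with the action of \emph{every} element of $H$, on $M^{\otimes 3}$ the map $\alpha_i$ ($\alpha$ applied in the $i$-th tensor slot) commutes with the operator $R_{jk}$ induced by $R$ on slots $j$ and $k$, even when $i\in\{j,k\}$. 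Since moreover $\alpha_i\circ\tau_{jk}=\tau_{jk}\circ\alpha_{\sigma(i)}$ where $\sigma=(j\,k)$, writing $B_{jk}=\tau_{jk}\circ R_{jk}$ I would first record the relation
\[
\alpha_i\circ B_{jk}\;=\;B_{jk}\circ\alpha_{\sigma(i)}\qquad\text{on }M^{\otimes 3}.
\]

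Granting this relation, the rest is bookkeeping. Writing $(\alpha\otimes B_R)=\alpha_1 B_{23}$ and $(B_R\otimes\alpha)=B_{12}\alpha_3$, the left-hand side of \eqref{eq:HYBE} is $\alpha_1 B_{23}\,B_{12}\alpha_3\,\alpha_1 B_{23}$; I would use the relation repeatedly to slide every $\alpha_i$ rightward past the $B_{jk}$'s (each pass simply permutes the subscript of the $\alpha$), so that all three copies of $\alpha$ collect at the far right into $\alpha^{\otimes 3}=\alpha_1\alpha_2\alpha_3$ and what remains is $B_{23}B_{12}B_{23}=(Id\otimes B_R)\circ(B_R\otimes Id)\circ(Id\otimes B_R)$. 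The identical maneuver turns the right-hand side of \eqref{eq:HYBE} into $B_{12}B_{23}B_{12}\circ\alpha^{\otimes 3}=(B_R\otimes Id)\circ(Id\otimes B_R)\circ(B_R\otimes Id)\circ\alpha^{\otimes 3}$. The classical YBE \eqref{eq:YBE} for $B_R$ says exactly that $B_{23}B_{12}B_{23}=B_{12}B_{23}B_{12}$, so the two sides of \eqref{eq:HYBE} agree and the proof is complete.

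I do not expect a serious obstacle: all the content is concentrated in the single observation that $H$-linearity of $\alpha$ upgrades the statement ``$\alpha^{\otimes 2}$ commutes with the $R$-action'' to ``each $\alpha_i$ commutes with the $R$-action'', which is what makes the $\alpha$'s slide freely (up to permuting their tensor slots) past $B_R$. The one thing to watch is the bookkeeping of those permutations --- note, for example, that $\alpha_1$ does \emph{not} commute with $B_{12}$, rather $\alpha_1\circ B_{12}=B_{12}\circ\alpha_2$ --- so the slides must be performed in the correct order. If I wanted to avoid picking out components of $R$, I could instead phrase the whole computation using the $H^{\otimes 3}$-module structure on $M^{\otimes 3}$ together with the fact that $\alpha^{\otimes 3}$ is a morphism of $H^{\otimes 3}$-modules; the argument would be the same.
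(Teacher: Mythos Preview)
Your argument is correct. The commutation relation $\alpha_i\circ B_{jk}=B_{jk}\circ\alpha_{\sigma(i)}$ follows exactly as you say from $H$-linearity of $\alpha$ together with the obvious interaction of $\alpha_i$ with $\tau_{jk}$, and the sliding bookkeeping checks out: both sides of \eqref{eq:HYBE} factor as (the respective side of the classical YBE for $B_R$)$\,\circ\,\alpha^{\otimes 3}$.

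The paper's own proof is the same idea unwound in coordinates: writing $R=\sum_i s_i\otimes t_i$, it computes each side of \eqref{eq:HYBE} directly on a generic tensor $u\otimes v\otimes w$, uses $H$-linearity of $\alpha$ to pull the three copies of $\alpha$ through so that only $\alpha(u),\alpha(v),\alpha(w)$ appear, and then matches the two resulting expressions using the QYBE \eqref{eq:QYBE} in the elementwise form \eqref{eq:QYBE''}. Your version is slightly more modular in that it invokes the classical YBE for $B_R$ as a black box (already cited just before the theorem) rather than redoing that step via the QYBE; the paper's version has the virtue of being fully self-contained at the level of elements. Neither approach buys anything the other does not --- they are the same computation, one packaged abstractly and one concretely.
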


Dual to a quasi-triangular bialgebra is the notion of a \emph{dual quasi-triangular bialgebra} \cite{hay,lt,majid91,sch}.  It consists of a bialgebra $H$ and a \emph{dual quasi-triangular structure} $R \in \Hom(H \otimes H, \bk)$.  The exact definition of a dual quasi-triangular bialgebra will be recalled in Section ~\ref{sec:dqtb}.  Let $(H,R)$ be a dual quasi-triangular bialgebra, and let $M$ be an $H$-comodule via the map $\rho \colon M \to H \otimes M$.  For an element $u \in M$, write $\rho(u) = \sum_{(u)} u_H \otimes u_M$.  Define a bilinear map $B^R \colon M \otimes M \to M \otimes M$ by
\begin{equation}
\label{eq:B^R}
B^R(u \otimes v) = \sum_{(u)(v)} R(v_H \otimes u_H) v_M \otimes u_M.
\end{equation}
Then it is well-known that $B^R$ is a solution of the YBE \eqref{eq:YBE}; see, e.g., \cite[Proposition VIII.5.2]{kassel}.  This gives another systematic way to produce solutions of the YBE.  Conversely, by the FRT construction \cite{rtf}, every $R$-matrix for a finite dimensional vector space $M$ has the form $B^R$ for some dual quasi-triangular bialgebra $H$ and some $H$-comodule structure on $M$.  The following result, dual to Theorem ~\ref{thm:qtb}, will be proved in Section ~\ref{sec:dqtb}.

\begin{theorem}
\label{thm:dqtb}
Let $(H,R)$ be a dual quasi-triangular bialgebra, $M$ be an $H$-comodule, and $\alpha \colon M \to M$ be an $H$-comodule morphism.  Then the map $B^R$ \eqref{eq:B^R} is a solution of the HYBE \eqref{eq:HYBE} for $(M,\alpha)$.
\end{theorem}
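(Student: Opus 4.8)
The plan is to deduce the HYBE \eqref{eq:HYBE} for $B^R$ from the ordinary YBE \eqref{eq:YBE} for $B^R$, which was recalled above as a classical fact, by checking that the twisting map $\alpha$ slides across $B^R$ in a controlled way. Write $\rho(u) = \sum_{(u)} u_H \otimes u_M$ for the coaction. Since $\alpha$ is an $H$-comodule morphism, $\rho(\alpha u) = \sum_{(u)} u_H \otimes \alpha(u_M)$, and substituting this into \eqref{eq:B^R} gives immediately
\begin{equation*}
B^R \circ (\alpha \otimes Id_M) = (Id_M \otimes \alpha) \circ B^R , \qquad B^R \circ (Id_M \otimes \alpha) = (\alpha \otimes Id_M) \circ B^R .
\end{equation*}
In particular $B^R \circ \alpha^{\otimes 2} = \alpha^{\otimes 2} \circ B^R$, so the HYBE is meaningful for $(M,\alpha)$; the real content is that $B^R$, being a twisted transposition, swaps an $\alpha$ inserted on the left factor with one inserted on the right.

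Next I would set up notation on $M^{\otimes 3}$: let $\alpha_i$ denote $\alpha$ in the $i$-th factor, $B_{12} = B^R \otimes Id_M$ and $B_{23} = Id_M \otimes B^R$. The two identities above, together with the obvious commutations between operators acting on disjoint factors, give the sliding rules $B_{12}\alpha_1 = \alpha_2 B_{12}$, $B_{12}\alpha_2 = \alpha_1 B_{12}$, $B_{12}\alpha_3 = \alpha_3 B_{12}$, and symmetrically $B_{23}\alpha_3 = \alpha_2 B_{23}$, $B_{23}\alpha_2 = \alpha_3 B_{23}$, $B_{23}\alpha_1 = \alpha_1 B_{23}$ -- moving an $\alpha_i$ leftwards past $B_{jk}$ replaces its index by the image of $i$ under the transposition $(j\,k)$. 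Since $\alpha \otimes B^R = \alpha_1 B_{23}$ and $B^R \otimes \alpha = B_{12}\alpha_3$, the left-hand side of \eqref{eq:HYBE} equals $\alpha_1 B_{23} B_{12}\alpha_3\alpha_1 B_{23}$ and the right-hand side equals $B_{12}\alpha_3\alpha_1 B_{23} B_{12}\alpha_3$. Pulling every $\alpha_i$ to the far left via the sliding rules, one step at a time, and reordering the mutually commuting $\alpha_i$'s, each side collapses to $\alpha^{\otimes 3}$ composed with a pure word in the $B$'s: a direct check gives $\alpha_1\alpha_2\alpha_3 \cdot (B_{23}B_{12}B_{23})$ on the left and $\alpha_1\alpha_2\alpha_3 \cdot (B_{12}B_{23}B_{12})$ on the right (the three $\alpha$-insertions land one on each strand). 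Since the ordinary YBE for $B^R$ is exactly $B_{23}B_{12}B_{23} = B_{12}B_{23}B_{12}$, the two sides agree, which is \eqref{eq:HYBE}.

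I do not anticipate a genuine obstacle; the points needing care are the index bookkeeping in the sliding step and the fact that $\alpha$ is not assumed invertible, so the sliding rules may only be used to push $\alpha_i$'s outward -- which is all the argument requires. Should one wish to avoid citing the YBE for $B^R$, the same conclusion follows by dualizing the proof of Theorem \ref{thm:qtb} line by line: expand both sides of \eqref{eq:HYBE} on $u \otimes v \otimes w$ in Sweedler notation, simplify using the dual quasi-triangular axioms for $R$, and carry the three copies of $\alpha$ through the computation passively.
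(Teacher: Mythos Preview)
Your argument is correct and is genuinely different from the paper's proof. The paper computes both sides of the HYBE on a generic tensor $u\otimes v\otimes w$ directly in Sweedler notation and, via two lemmas, shows each equals the common expression \eqref{dagga3}; this requires repeated use of the comodule coassociativity \eqref{eq:coass}, the colinearity of $\alpha$ \eqref{eq:star}, and all three dual quasi-triangular axioms \eqref{eq:quasicom}, \eqref{eq:R13R23}, \eqref{eq:R13R12}. You instead isolate the single identity $B^R\circ(\alpha\otimes Id_M)=(Id_M\otimes\alpha)\circ B^R$ (and its mirror), then use it to strip the three $\alpha$'s off both sides of the HYBE as a common left factor $\alpha^{\otimes 3}$, reducing the statement to the classical YBE for $B^R$, which you cite. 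This is shorter and more conceptual, and it makes transparent that the HYBE here carries no new content beyond the ordinary YBE plus colinearity of $\alpha$; the paper's approach, by contrast, is self-contained and re-derives the YBE-type identity from the axioms of $R$ rather than quoting it. One small remark: your caveat that the sliding rules ``may only be used to push $\alpha_i$'s outward'' because $\alpha$ need not be invertible is unnecessary---the identities $B_{12}\alpha_1=\alpha_2 B_{12}$ etc.\ are equalities of linear maps and can be read in either direction; invertibility of $\alpha$ would only matter if you tried to cancel $\alpha^{\otimes 3}$ at the end, which you do not.
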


Solutions of the YBE (i.e., $R$-matrices) can be extended to operators that satisfy the braid relations, which can then be used to construct representations of the braid group.  We extend this construction to solutions of the HYBE.  Let $n \geq 3$ and $\br$ be the braid group on $n$ strands \cite{artin}.  The braid group $\br$ has generators $\sigma_i$ ($1 \leq i \leq n - 1$), which satisfy the determining \emph{braid relations}:
\begin{equation}
\label{eq:braidrelations}
\sigma_i \sigma_j = \sigma_j \sigma_i \text{ if $|i - j| > 1$} \quad \text{ and } \quad
\sigma_i\sigma_{i+1}\sigma_i = \sigma_{i+1}\sigma_i\sigma_{i+1}.
\end{equation}
The following result, which will be proved in Section ~\ref{sec:braid}, shows that each solution of the HYBE can be extended to operators that satisfy the braid relations.  With an additional invertibility condition, these operators give a representation of the braid group $\br$.  It generalizes the usual braid group representations from $R$-matrices, as discussed, for example, in \cite[X.6.2]{kassel}.

\begin{theorem}
\label{thm:braid}
Let $B$ be a solution of the HYBE \eqref{eq:HYBE} for $(M,\alpha)$ and $n \geq 3$.  Define the linear maps $B_i \colon M^{\otimes n} \to M^{\otimes n}$ ($1 \leq i \leq n - 1$) by
\begin{equation}
\label{eq:Bi}
B_i = \begin{cases} B \otimes \alpha^{\otimes (n-2)} & \text{ if $i = 1$},\\
\alpha^{\otimes(i-1)} \otimes B \otimes \alpha^{\otimes (n - i - 1)} & \text{ if $1 < i < n-1$}, \\
\alpha^{\otimes(n-2)} \otimes B & \text{ if $i = n-1$}.\end{cases}
\end{equation}
Then the maps $B_i$ satisfy the braid relations ~\eqref{eq:braidrelations}.  %Therefore, there is a unique monoid morphism $\rho^B_n \colon \br \to \Hom(M^{\otimes n}, M^{\otimes n})$ with $\rho^B_n(\sigma_i) = B_i$.  
Moreover, if both $\alpha$ and $B$ are invertible, then so are the $B_i$, and there is a unique group morphism $\rho^B_n \colon \br \to \Aut(M^{\otimes n})$  with $\rho^B_n(\sigma_i) = B_i$.
\end{theorem}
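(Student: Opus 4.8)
The plan is to reduce each of the two families of braid relations in \eqref{eq:braidrelations} to an identity on a bounded number of tensor factors, where it becomes either the compatibility condition $B \circ \alpha^{\otimes 2} = \alpha^{\otimes 2} \circ B$ or the HYBE \eqref{eq:HYBE} itself. Two elementary facts will be used repeatedly: composition distributes over tensor products, $(f_1 \otimes g_1) \circ (f_2 \otimes g_2) = (f_1 \circ f_2) \otimes (g_1 \circ g_2)$; and, under the identification $M^{\otimes n} \cong M^{\otimes a} \otimes M^{\otimes 2} \otimes M^{\otimes b}$ with $a = i-1$ and $b = n-i-1$, the operator $B_i$ is simply $\alpha^{\otimes a} \otimes B \otimes \alpha^{\otimes b}$ — the piecewise formula \eqref{eq:Bi} just records the boundary cases $a = 0$ and $b = 0$.

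First I would treat $B_i B_j = B_j B_i$ for $|i - j| > 1$, say $j \geq i + 2$. Isolate the pairs of factors $\{i, i+1\}$ and $\{j, j+1\}$. On $\{i, i+1\}$ the operator $B_i$ acts by $B$ and $B_j$ acts by $\alpha^{\otimes 2}$; on $\{j, j+1\}$ the roles are reversed; on every other factor both act by $\alpha$. Computing the two composites slot by slot, $B_i B_j$ and $B_j B_i$ therefore agree except that on $\{i, i+1\}$ one reads $\alpha^{\otimes 2} \circ B$ and the other $B \circ \alpha^{\otimes 2}$, and symmetrically on $\{j, j+1\}$; these coincide by the compatibility hypothesis, so $B_i B_j = B_j B_i$.

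Next, $B_i B_{i+1} B_i = B_{i+1} B_i B_{i+1}$. Isolate the three consecutive factors, $M^{\otimes n} \cong M^{\otimes(i-1)} \otimes M^{\otimes 3} \otimes M^{\otimes(n-i-2)}$ (this also covers $i = 1$ and $i = n-2$, where one outer block is empty). Under this identification $B_i = \alpha^{\otimes(i-1)} \otimes (B \otimes \alpha) \otimes \alpha^{\otimes(n-i-2)}$ and $B_{i+1} = \alpha^{\otimes(i-1)} \otimes (\alpha \otimes B) \otimes \alpha^{\otimes(n-i-2)}$. Distributing composition over tensor products, the two triple composites have the same outer blocks, namely $(\alpha^3)^{\otimes(i-1)}$ and $(\alpha^3)^{\otimes(n-i-2)}$ (threefold composites), while their middle blocks are $(B \otimes \alpha) \circ (\alpha \otimes B) \circ (B \otimes \alpha)$ and $(\alpha \otimes B) \circ (B \otimes \alpha) \circ (\alpha \otimes B)$, which are equal by the HYBE \eqref{eq:HYBE}. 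Hence the braid relation holds.

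Finally, if $\alpha$ and $B$ are both invertible then each $B_i$, being a tensor product of the invertible maps $\alpha$ and $B$, is invertible with inverse the tensor product of the inverses, so $B_i \in \Aut(M^{\otimes n})$. Since $\br$ is the group presented by the generators $\sigma_i$ subject to the relations \eqref{eq:braidrelations}, and the elements $B_i$ satisfy exactly those relations by the two preceding paragraphs, the universal property of a presented group yields a unique group morphism $\rho^B_n \colon \br \to \Aut(M^{\otimes n})$ with $\rho^B_n(\sigma_i) = B_i$. I expect the only delicate point to be bookkeeping — tracking which slot each $B_i$ hits with $B$ rather than $\alpha$, and handling the boundary indices $i \in \{1, n-1\}$ uniformly — since all the mathematical content is already contained in the compatibility condition and the HYBE.
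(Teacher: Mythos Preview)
Your proposal is correct and follows essentially the same approach as the paper: both reduce the commuting relation $B_iB_j = B_jB_i$ for $|i-j|>1$ to the compatibility $B \circ \alpha^{\otimes 2} = \alpha^{\otimes 2} \circ B$, reduce $B_iB_{i+1}B_i = B_{i+1}B_iB_{i+1}$ to the HYBE on the middle three factors with outer blocks $(\alpha^3)^{\otimes(i-1)}$ and $(\alpha^3)^{\otimes(n-i-2)}$, and then invoke the presentation of $\br$ for the existence and uniqueness of $\rho^B_n$. Your treatment is slightly more explicit about the boundary cases and the slot-by-slot bookkeeping, but the mathematical content is identical.
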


Each Lie algebra gives an $R$-matrix (Theorem ~\ref{thm:homlie} with $\alpha = Id$) $B$ on $L' = \bk \oplus L$.  This in turn gives a corresponding representation of the braid group $\br$ on $L'^{\otimes n}$, as in Theorem ~\ref{thm:braid}.  Our $\alpha$-twisted setting is more flexible and provides many more braid group representations on $L'^{\otimes n}$ for each Lie algebra $L$.  For example, as we will discuss in Example ~\ref{ex:homlie}, each Lie algebra $L$ gives rise to a family $\{L_\alpha = (L,[-,-]_{\alpha},\alpha)\}$ of Hom-Lie algebras, one for each Lie algebra endomorphism $\alpha$ of $L$.  Thus, starting with a Lie algebra $L$ and using Theorems ~\ref{thm:homlie} and ~\ref{thm:braid} and Corollary ~\ref{cor:Balpha} on the sub-family $\{L_\alpha \colon \alpha \text{ invertible}\}$ of Hom-Lie algebras, we obtain a family of representations of $\br$ on $L'^{\otimes n}$.  As an illustration, in Example ~\ref{ex:sl2}, starting with the Lie algebra $sl(2)$, we will construct an explicit infinite, $1$-parameter family of representations of $\br$ on $sl(2)'^{\otimes n} = (\bk \oplus sl(2))^{\otimes n}$.

Likewise, in Theorems ~\ref{thm:qtb} and ~\ref{thm:dqtb} with fixed $(H,R)$ and $M$, suppose we run $\alpha \colon M \to M$ through all the $H$-(co)module automorphisms of $M$.  Then we obtain from Theorem ~\ref{thm:braid} a family, indexed by these $\alpha$, of braid group representations on $M^{\otimes n}$ because $B_R$ \eqref{eq:BR} and $B^R$ \eqref{eq:B^R} are always invertible.

This finishes the descriptions of our main results.  The rest of this paper is organized as follows.  In Section ~\ref{sec:hybe}, we fix some notations and give a few basic examples of solutions of the HYBE.  In Section ~\ref{sec:homlie}, we give some examples of Hom-Lie algebras and prove Theorem ~\ref{thm:homlie}.  We also show that $B_\alpha$ \eqref{eq:Balpha} is invertible, provided that $\alpha$ is invertible (Corollary ~\ref{cor:Balpha}).  In Sections ~\ref{sec:qtb} and ~\ref{sec:dqtb}, we recall the definitions of (dual) quasi-triangular bialgebras and prove Theorems ~\ref{thm:qtb} and ~\ref{thm:dqtb}.  In Section ~\ref{sec:braid}, we prove Theorem ~\ref{thm:braid} and illustrate it with some Hom-Lie deformations of $sl(2)$ (Example ~\ref{ex:sl2}).

%%%%%%%%%%%%%%%%%%%%%%%%%%%%%%%%%%
\section{Hom-Yang-Baxter equation}
\label{sec:hybe}
%%%%%%%%%%%%%%%%%%%%%%%%%%%%%%%%%%

Before we discuss some basic solutions of the HYBE, let us fix some notations.

%%%%%%%%%%%%%%%%%%%%%%%%%%%%%%%%%%%%%%
\subsection{Conventions and notations}
\label{subsec:notations}

Throughout the rest of this paper, $\bk$ denotes a field of characteristic $0$.  Vector spaces, tensor products, and linearity are all meant over $\bk$, unless otherwise specified.

Given two vector spaces $V$ and $W$, denote by $\tau = \tau_{VW} \colon V \otimes W \to W \otimes V$ the twist isomorphism, i.e., $\tau(v \otimes w) = w \otimes v$.  Denote by $\Hom(V,W)$ the vector space of linear maps from $V$ to $W$.

For a coalgebra $C$ with comultiplication $\Delta \colon C \to C \otimes C$, we use Sweedler's notation $\Delta(x) = \sum_{(x)} x'\otimes x''$ \cite{sweedler}.  Suppose, in addition, that $M$ is a $C$-comodule with structure map $\rho \colon M \to C \otimes M$. For $u \in M$, we write $\rho(u) = \sum_{(u)} u_C \otimes u_M$.

%%%%%%%%%%%%%%%%%%%%%%%%
\subsection{Hom-modules}

By a \textbf{Hom-module}, we mean a pair $(V, \alpha)$ in which $V$ is a vector space and $\alpha \colon V \to V$ is a linear map.
A morphism $(V, \alpha_V) \to (W, \alpha_W)$ of Hom-modules is a linear map $f \colon V \to W$ such that $\alpha_W \circ f = f \circ \alpha_V$.  When there is no danger of confusion, we will denote a Hom-module $(V,\alpha)$ simply by $V$.

The tensor product of the Hom-modules $(V, \alpha_V)$ and $(W, \alpha_W)$ consists of the vector space $V \otimes W$ and the linear self-map $\alpha_V \otimes \alpha_W$.

%%%%%%%%%%%%%%%%%%%%%%%%%%%%%%%%%%%%%
\subsection{Hom-Yang-Baxter equation}

% HYBE

Let $(M,\alpha)$ be a Hom-module, and let $B \colon M \otimes M \to M \otimes M$ be a morphism of Hom-modules, i.e., $B \circ \alpha^{\otimes 2} = \alpha^{\otimes 2} \circ B$.  Then the \textbf{Hom-Yang-Baxter Equation} (HYBE) for the Hom-module $(M,\alpha)$ is defined in \eqref{eq:HYBE}.

% basic examples

In the rest of this section, we give a few basic examples of solutions of the HYBE.

\begin{example}
If $B \colon V \otimes V \to V \otimes V$ is an $R$-matrix \eqref{eq:YBE}, then $B$ is also a solution of the HYBE for the Hom-module $(V,Id_V)$.\qed
\end{example}

\begin{example}
Let $(M,\alpha)$ be a Hom-module.  Define a map $\tau_\alpha \colon M \otimes M \to M \otimes M$ by
\[
\tau_\alpha(u \otimes v) = \alpha(v) \otimes \alpha(u)
\]
for $u,v \in M$.  Then clearly $\tau_\alpha$ is a morphism of Hom-modules.  Moreover, $\tau_\alpha$ is a solution of the HYBE for $(M,\alpha)$.  Indeed, with $B = \tau_\alpha$, both sides of \eqref{eq:HYBE}, when applied to $u \otimes v \otimes w \in M^{\otimes 3}$, are equal to $\alpha^3(w) \otimes \alpha^3(v) \otimes \alpha^3(u)$.\qed
\end{example}

\begin{proposition}
\label{prop:Binverse}
Let $B$ be a solution of the HYBE for the Hom-module $(M,\alpha)$.
\begin{enumerate}
\item
If $\lambda \in \bk$, then $\lambda B$ is also a solution of the HYBE for $(M,\alpha)$.
\item
If both $\alpha$ and $B$ are invertible, then $B^{-1}$ is a solution of the HYBE for $(M,\alpha^{-1})$.
\end{enumerate}
\end{proposition}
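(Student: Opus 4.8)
The plan is to verify both claims by direct manipulation of the defining equation \eqref{eq:HYBE}, exploiting the fact that its two sides are built from the same three elementary factors.

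For part (1), I would first note that scaling $B$ by $\lambda$ does not disturb the Hom-module morphism condition, since $(\lambda B) \circ \alpha^{\otimes 2} = \lambda(B \circ \alpha^{\otimes 2}) = \lambda(\alpha^{\otimes 2} \circ B) = \alpha^{\otimes 2} \circ (\lambda B)$. Then, substituting $\lambda B$ for $B$ in \eqref{eq:HYBE}, each of the three composed factors on each side contains exactly one occurrence of $B$, so each side acquires an overall scalar $\lambda^{3}$. Hence the equation for $\lambda B$ is obtained from that for $B$ by multiplying both sides by $\lambda^{3}$, and it holds. This step is pure bookkeeping.

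For part (2), the key observations are that $(\alpha \otimes B)^{-1} = \alpha^{-1} \otimes B^{-1}$ and $(B \otimes \alpha)^{-1} = B^{-1} \otimes \alpha^{-1}$ when $\alpha$ and $B$ are invertible, and that $(\alpha^{\otimes 2})^{-1} = (\alpha^{-1})^{\otimes 2}$. The morphism condition $B \circ \alpha^{\otimes 2} = \alpha^{\otimes 2} \circ B$ inverts to $B^{-1} \circ (\alpha^{-1})^{\otimes 2} = (\alpha^{-1})^{\otimes 2} \circ B^{-1}$, so $B^{-1}$ is a morphism of the Hom-module $(M, \alpha^{-1})$. For the HYBE itself, I would take the inverse of both sides of \eqref{eq:HYBE}. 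Since the inverse of a composition reverses the order of the factors, and the left-hand side $(\alpha \otimes B) \circ (B \otimes \alpha) \circ (\alpha \otimes B)$ has identical first and third factors (similarly on the right), the inverse of the left-hand side is $(\alpha^{-1} \otimes B^{-1}) \circ (B^{-1} \otimes \alpha^{-1}) \circ (\alpha^{-1} \otimes B^{-1})$, and the inverse of the right-hand side is $(B^{-1} \otimes \alpha^{-1}) \circ (\alpha^{-1} \otimes B^{-1}) \circ (B^{-1} \otimes \alpha^{-1})$. These are precisely the two sides of the HYBE \eqref{eq:HYBE} for $(M, \alpha^{-1})$ with $B$ replaced by $B^{-1}$, so the equation holds.

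The only point requiring care is the order reversal when inverting a composite; because of the palindromic shape of each side of \eqref{eq:HYBE}, this reversal happens to leave the expressions in exactly the form required, so in the end there is no real obstacle — the proposition is a formal consequence of the structure of \eqref{eq:HYBE}.
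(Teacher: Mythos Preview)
Your proposal is correct and follows essentially the same approach as the paper: both arguments reduce to the elementary identities $\lambda B \otimes \alpha = \lambda(B \otimes \alpha)$, $\alpha \otimes \lambda B = \lambda(\alpha \otimes B)$ for part~(1), and $(B \otimes \alpha)^{-1} = B^{-1} \otimes \alpha^{-1}$, $(\alpha \otimes B)^{-1} = \alpha^{-1} \otimes B^{-1}$ for part~(2), together with the corresponding facts for the morphism condition. Your explicit remark about the palindromic shape of \eqref{eq:HYBE} absorbing the order reversal is a nice expository touch, but the underlying argument is identical to the paper's.
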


\begin{proof}
The first assertion follows from $\lambda B \circ \alpha^{\otimes 2} = \lambda(B \circ \alpha^{\otimes 2})$, $\alpha^{\otimes 2} \circ \lambda B = \lambda(\alpha^{\otimes 2} \circ B)$, $\lambda B \otimes \alpha = \lambda(B \otimes \alpha)$, and $\alpha \otimes \lambda B = \lambda(\alpha \otimes B)$.

The second assertion follows from $B^{-1} \circ (\alpha^{-1})^{\otimes 2} = (\alpha^{\otimes 2} \circ B)^{-1}$, $(\alpha^{-1})^{\otimes 2} \circ B^{-1} = (B \circ \alpha^{\otimes 2})^{-1}$, $B^{-1} \otimes \alpha^{-1} = (B \otimes \alpha)^{-1}$, and $\alpha^{-1} \otimes B^{-1} = (\alpha \otimes B)^{-1}$.
\end{proof}

%%%%%%%%%%%%%%%%%%%%%%%%%%%%%%%%%%%%%%%%%%%%%%%%%%%%%
\section{Solutions of the HYBE from Hom-Lie algebras}
\label{sec:homlie}
%%%%%%%%%%%%%%%%%%%%%%%%%%%%%%%%%%%%%%%%%%%%%%%%%%%%%

The purpose of this section is to prove Theorem ~\ref{thm:homlie}, which says that every Hom-Lie algebra gives a solution of the HYBE \eqref{eq:HYBE}.  We will also observe that, in the setting of Theorem ~\ref{thm:homlie}, if $\alpha$ is invertible, then so is $B_\alpha$ (Corollary ~\ref{cor:Balpha}).

We already defined Hom-Lie algebras in the Introduction.  Before we prove Theorem ~\ref{thm:homlie}, let us first discuss two classes of examples.

% examples

\begin{example}[\textbf{Hom-Lie algebras from Hom-associative algebras}]
The fact that associative algebras give rise to Lie algebras via the commutator bracket has a Hom-algebra counterpart.  By a \textbf{Hom-associative algebra}, we mean a triple $(A,\mu,\alpha)$, in which $(A,\alpha)$ is a Hom-module and $\mu \colon A \otimes A \to A$ is a bilinear map.  This data is required to satisfy two conditions: \emph{multiplicativity}, $\alpha(\mu(x,y)) = \mu(\alpha(x),\alpha(y))$, and \emph{Hom-associativity},
\[
\mu(\alpha(x),\mu(y,z)) = \mu(\mu(x,y),\alpha(z)),
\]
for $x,y,z \in A$.  An associative algebra is an example of a Hom-associative algebra in which $\alpha = Id$.  Hom-associative algebras (without multiplicativity) were introduced in \cite{ms}, and they play the roles of associative algebras in the Hom-algebra setting.  Indeed, given a Hom-associative algebra $(A,\mu,\alpha)$, we obtain a Hom-Lie algebra $(A,[-,-],\alpha)$ \cite{ms} in which
\[
[x,y] = \mu(x,y) - \mu(y,x)
\]
for $x,y \in A$.  It is clear that this bracket is skew-symmetric and multiplicative. The Hom-Jacobi identity \eqref{eq:hom-jacobi} can be proved by a direct computation using the Hom-associativity of $\mu$.

Although it is not needed in this paper, we point out that, similar to the universal enveloping algebra of a Lie algebra, there is a universal enveloping Hom-associative algebra functor $\Ubar$ from Hom-Lie algebras to Hom-associative algebras \cite{yau,yau3}.  The ordinary enveloping algebra of a Lie algebra is a bialgebra.  Similarly, one can define a \emph{Hom-bialgebra} by dualizing and extending the definition of a Hom-associative algebra.  It is shown in \cite[Section 3]{yau3} that, for a Hom-Lie algebra $L$, its universal enveloping Hom-associative algebra $\Ubar(L)$ has the structure of a Hom-bialgebra.  Besides the sources cited above, other papers that discuss Hom-Lie algebras and related Hom-algebras include \cite{hls,ms2,ms3,ms4,yau2}.\qed
\end{example}

\begin{example}[\textbf{Hom-Lie algebras as deformations of Lie algebras}]
\label{ex:homlie}
Another systematic way to obtain Hom-Lie algebras is by deforming Lie algebras along endomorphisms.  Let $(L,[-,-])$ be a Lie algebra, and let $\alpha \colon L \to L$ be a Lie algebra endomorphism.  Define a new bracket $[-,-]_\alpha$ on $L$ by setting
\[
[x,y]_\alpha = \alpha[x,y].
\]
Then a direct calculation shows that $L_\alpha = (L,[-,-]_\alpha,\alpha)$ is a Hom-Lie algebra \cite[Theorem 3.3]{yau2}.  Using this construction, one can easily obtain many examples of Hom-Lie algebras.  The reader is referred to \cite[Section 3]{yau2} for examples of Hom-Lie deformations of $sl(n)$, the Heisenberg algebra, Lie algebras associated to Lie groups, and the Witt algebra.

We point out that the procedure described in the previous paragraph can be applied to other types of algebras as well.  Indeed, one can deform an associative algebra along an endomorphism and obtain a Hom-associative algebra \cite[Theorem 2.5]{yau2}.  Dualizing and extending this procedure \cite{ms4,yau3}, one can obtain Hom-coalgebras and Hom-bialgebras by deforming coalgebras and bialgebras, respectively, along endomorphisms.\qed
\end{example}

After the above discussion of examples of Hom-Lie algebras, we now proceed to prove Theorem ~\ref{thm:homlie}.

\begin{proof}[Proof of Theorem ~\ref{thm:homlie}]
First, it is clear that $B_\alpha \circ \alpha^{\otimes 2} = \alpha^{\otimes 2} \circ B_\alpha$.  To prove that $B_\alpha$ satisfies the HYBE \eqref{eq:HYBE}, consider a typical generator $\gamma = (a,x) \otimes (b,y) \otimes (c,z)$ in $L'^{\otimes 3}$.  A direct computation gives:
\begin{equation}
\label{AF}
\begin{split}
(\alpha & \otimes B_\alpha) \circ (B_\alpha \otimes \alpha) \circ (\alpha \otimes B_\alpha)(\gamma)\\
& = (c,\alpha^3(z)) \otimes (b,\alpha^3(y)) \otimes (a,\alpha^3(x)) + (c,\alpha^3(z)) \otimes (1,0) \otimes (0,[\alpha^2(x),\alpha^2(y)])\\
& + (1,0) \otimes (b,\alpha^3(y)) \otimes (0,\alpha [\alpha(x),\alpha(z)]) + (1,0) \otimes (1,0) \otimes (0,[[\alpha(x),\alpha(z)],\alpha^2(y)])\\
& + (1,0) \otimes (0,\alpha^2[y,z]) \otimes (a,\alpha^3(x)) + (1,0) \otimes (1,0) \otimes (0,[\alpha^2(x),\alpha [y,z]]).
\end{split}
\end{equation}
Likewise, we have:
\begin{equation}
\label{AE'}
\begin{split}
(B_\alpha & \otimes \alpha) \circ (\alpha \otimes B_\alpha) \circ (B_\alpha \otimes \alpha)(\gamma)\\
& = (c,\alpha^3(z)) \otimes (b,\alpha^3(y)) \otimes (a,\alpha^3(x)) + (1,0) \otimes (0,[\alpha^2(y),\alpha^2(z)]) \otimes (a,\alpha^3(x))\\
& + (1,0) \otimes (b,\alpha^3(y)) \otimes (0,\alpha[\alpha(x),\alpha(z)]) + (c,\alpha^3(z)) \otimes (1,0) \otimes (0,\alpha^2[x,y])\\
& + (1,0) \otimes (1,0) \otimes (0,\alpha[[x,y],\alpha(z)]).
\end{split}
\end{equation}
Using the multiplicativity of $\alpha$, four terms in \eqref{AF} (those not of the form $(1,0) \otimes (1,0) \otimes \cdots$) are equal to four corresponding terms in \eqref{AE'}.  Therefore, $B_\alpha$ satisfies the HYBE, provided
\begin{equation}
\label{eq:alphaHJ}
[[\alpha(x),\alpha(z)],\alpha^2(y)] + [\alpha^2(x),\alpha[y,z]] = \alpha[[x,y],\alpha(z)].
\end{equation}
Using the multiplicativity of $\alpha$ and the skew-symmetry of the bracket, the condition \eqref{eq:alphaHJ} can be rewritten as
\begin{equation}
\label{eq:alphaHJ'}
0 = \alpha\left([[x,y],\alpha(z)] + [[z,x],\alpha(y)] + [[y,z],\alpha(x)]\right).
\end{equation}
The condition \eqref{eq:alphaHJ'} is true because of the Hom-Jacobi identity \eqref{eq:hom-jacobi} in $L$.
\end{proof}

\begin{corollary}
\label{cor:Balpha}
With the same hypotheses as in Theorem ~\ref{thm:homlie}, assume in addition that $\alpha \colon L \to L$ is invertible.  Then $B_\alpha$ is also invertible, whose inverse is given by
\[
B_\alpha^{-1}((a,x) \otimes (b,y)) = (b,\alpha^{-1}(y)) \otimes (a,\alpha^{-1}(x)) + (0,\alpha^{-2}[x,y]) \otimes (1,0).
\]
Moreover, $B_\alpha^{-1}$ is a solution of the HYBE for $(L',\alpha^{-1})$, where $\alpha^{-1}(a,x) = (a,\alpha^{-1}(x))$.
\end{corollary}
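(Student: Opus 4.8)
The plan is to prove the Corollary in two steps: first check directly that the displayed formula gives a two-sided inverse of $B_\alpha$, and then deduce the final assertion from Proposition~\ref{prop:Binverse}(2).

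Write $C$ for the bilinear endomorphism of $L' \otimes L'$ defined by the displayed formula, and recall from Theorem~\ref{thm:homlie} that
\[
B_\alpha((a,x) \otimes (b,y)) = (b,\alpha(y)) \otimes (a,\alpha(x)) + (1,0) \otimes (0,[x,y]).
\]
First I would evaluate $C \circ B_\alpha$ on a generator $(a,x) \otimes (b,y)$. Feeding the first summand $(b,\alpha(y)) \otimes (a,\alpha(x))$ of $B_\alpha((a,x)\otimes(b,y))$ into $C$ produces $(a,x) \otimes (b,y)$ plus the correction term $(0,\alpha^{-2}[\alpha(y),\alpha(x)]) \otimes (1,0)$; multiplicativity gives $[\alpha(y),\alpha(x)] = \alpha[y,x]$ and skew-symmetry turns this into $-\alpha[x,y]$, so the correction term is $-(0,\alpha^{-1}[x,y]) \otimes (1,0)$. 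Feeding the second summand $(1,0) \otimes (0,[x,y])$ into $C$ produces exactly $(0,\alpha^{-1}[x,y]) \otimes (1,0)$, the other piece vanishing because the bracket is bilinear. The two correction terms cancel, so $C \circ B_\alpha = Id$. The verification that $B_\alpha \circ C = Id$ is completely parallel, using the same two facts (multiplicativity and skew-symmetry of $[-,-]$) to cancel the correction terms. Hence $B_\alpha$ is invertible with $B_\alpha^{-1} = C$.

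For the last sentence, observe that since $\alpha \colon L \to L$ is invertible, so is the induced self-map $(a,x) \mapsto (a,\alpha(x))$ of $L' = \bk \oplus L$, its inverse being $(a,x) \mapsto (a,\alpha^{-1}(x))$. By Theorem~\ref{thm:homlie}, $B_\alpha$ solves the HYBE for $(L',\alpha)$, and we have just shown $B_\alpha$ is invertible; therefore Proposition~\ref{prop:Binverse}(2) applies and yields that $B_\alpha^{-1}$ solves the HYBE for $(L',\alpha^{-1})$, as claimed. There is no genuine obstacle in this argument: the only care required is the bookkeeping in the first step — keeping track of which tensor factor each correction term sits in and extracting the correct sign from skew-symmetry — together with the observation that, $L$ being possibly infinite-dimensional, one must verify both composites $C \circ B_\alpha$ and $B_\alpha \circ C$ rather than just one.
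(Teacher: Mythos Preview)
Your proposal is correct and follows essentially the same approach as the paper: a direct computation verifying that the displayed map is a two-sided inverse of $B_\alpha$, followed by an appeal to Proposition~\ref{prop:Binverse}(2) for the final assertion. The paper's own proof is in fact terser than yours, merely asserting that ``a direct computation'' establishes the inverse; your explicit bookkeeping with the correction terms (and the remark about checking both composites in the absence of finite-dimensionality) fills in precisely the details the paper omits.
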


\begin{proof}
A direct computation shows that the stated $B_\alpha^{-1}$ is indeed the two-sided inverse of $B_\alpha$ \eqref{eq:Balpha}.  The last assertion follows from the second part of Proposition ~\ref{prop:Binverse}.
\end{proof}

%%%%%%%%%%%%%%%%%%%%%%%%%%%%%%%%%%%%%%%%%%%%%%%%%%%%%%%%%%%%%%%%
\section{Solutions of the HYBE from quasi-triangular bialgebras}
\label{sec:qtb}
%%%%%%%%%%%%%%%%%%%%%%%%%%%%%%%%%%%%%%%%%%%%%%%%%%%%%%%%%%%%%%%%

The purpose of this section is to prove Theorem ~\ref{thm:qtb}.  Let us first recall some relevant definitions.

%%%%%%%%%%%%%%%%%%%%%%%%%%%%%%%%%%%%%%%%
\subsection{Quasi-triangular bialgebras}

Let $H = (H,\mu,\eta,\Delta,\varepsilon)$ be a bialgebra, in which $\mu \colon H \otimes H \to H$ is the associative multiplication, $\eta \colon \bk \to H$ is the unit, $\Delta \colon H \to H \otimes H$ is the coassociative comultiplication, and $\varepsilon \colon H \to \bk$ is the counit.  A \textbf{quasi-triangular structure} on $H$ \cite{dri87,dri89,kassel,majid} is an invertible element $R \in H \otimes H$ such that
\begin{equation}
\label{eq:almostcocom}
(\tau \circ \Delta)(x) = R \Delta(x) R^{-1}
\end{equation}
for $x \in H$,
\begin{equation}
\label{eq:R}
(\Delta \otimes Id_H)(R) = R_{13}R_{23}, \quad \text{and} \quad
(Id_H \otimes \Delta)(R) = R_{13}R_{12}.
\end{equation}
Here, if $R = \sum_i s_i \otimes t_i$, then
\[
R_{12} = \sum_i s_i \otimes t_i \otimes 1,\quad
R_{13} = \sum_i s_i \otimes 1 \otimes t_i,\quad \text{and} \quad
R_{23} = \sum_i 1 \otimes s_i \otimes t_i.
\]
We call $(H,R)$ a \textbf{quasi-triangular bialgebra}, which is also known as a \textbf{braided bialgebra}.  The quasi-triangular structure $R$ is also known as a \textbf{universal $R$-matrix}.

The reader is referred to \cite{dri87,dri89,kassel,majid} for detailed discussions and examples of quasi-triangular bialgebras, many of which are quantum groups.  In a quasi-triangular bialgebra $(H,R)$, the quasi-triangular structure $R$ satisfies the \textbf{Quantum Yang-Baxter Equation} (QYBE)
\begin{equation}
\label{eq:QYBE}
R_{12}R_{13}R_{23} = R_{23}R_{13}R_{12}.
\end{equation}
See, e.g., \cite[Theorem VIII.2.4]{kassel} for the proof.  Using the above notations, the QYBE \eqref{eq:QYBE} can be rewritten as
\begin{equation}
\label{eq:QYBE'}
\sum_{i,j,k} s_ks_j \otimes t_ks_i \otimes t_jt_i = \sum_{i,j,k} s_js_i \otimes s_kt_i \otimes t_kt_j.
\end{equation}
Consider the permutation isomorphism $\sigma \colon H^{\otimes 3} \to H^{\otimes 3}$ defined as $\sigma(x \otimes y \otimes z) = z \otimes y \otimes x$.  Applying $\sigma$ to both sides of \eqref{eq:QYBE'} yields
\begin{equation}
\label{eq:QYBE''}
\sum_{i,j,k} t_jt_i \otimes t_ks_i \otimes s_ks_j  = \sum_{i,j,k} t_kt_j \otimes s_kt_i \otimes s_js_i.
\end{equation}
We will make use of \eqref{eq:QYBE''} below.

% Theorem
\begin{proof}[Proof of Theorem ~\ref{thm:qtb}]
Let $u,v$, and $w$ be generic elements in $M$.  Using the notations above, the map $B_R$ \eqref{eq:BR} can be written as
\[
B_R(u \otimes v) = \sum_i t_iv \otimes s_iu,
\]
where $R = \sum_i s_i \otimes t_i$.  Since $\alpha \colon M \to M$ is $H$-linear, it is easy to see that $\alpha^{\otimes 2} \circ B_R = B_R \circ \alpha^{\otimes 2}$.

To see that $B_R$ satisfies the HYBE \eqref{eq:HYBE}, we write $\gamma = u \otimes v \otimes w$.  Using the $H$-linearity of $\alpha$, a direct computation shows that
\begin{equation}
\label{eq:BR1}
((\alpha \otimes B_R) \circ (B_R \otimes \alpha) \circ (\alpha \otimes B_R))(\gamma) = \sum_{i,j,k} t_jt_i\alpha(w) \otimes t_ks_i\alpha(v) \otimes s_ks_j\alpha(u)
\end{equation}
and
\begin{equation}
\label{eq:BR2}
((B_R \otimes \alpha) \circ (\alpha \otimes B_R) \circ (B_R \otimes \alpha))(\gamma) = \sum_{i,j,k} t_kt_j\alpha(w) \otimes s_kt_i\alpha(v) \otimes s_js_i\alpha(u).
\end{equation}
It follows from \eqref{eq:QYBE''} that \eqref{eq:BR1} and \eqref{eq:BR2} are equal.  Thus, $B_R$ satisfies the HYBE.
\end{proof}

%%%%%%%%%%%%%%%%%%%%%%%%%%%%%%%%%%%%%%%%%%%%%%%%%%%%%%%%%
\section{Solutions of the HYBE from dual quasi-triangular bialgebras}
\label{sec:dqtb}
%%%%%%%%%%%%%%%%%%%%%%%%%%%%%%%%%%%%%%%%%%%%%%%%%%%%%%%%%

The purpose of this section is to prove Theorem ~\ref{thm:dqtb}.  Let us first recall some relevant definitions.

%%%%%%%%%%%%%%%%%%%%%%%%%%%%%%%%%%%%%%%%%%%%%
\subsection{Dual quasi-triangular bialgebras}

Let $H = (H,\mu,\eta,\Delta,\varepsilon)$ be a bialgebra.  A \textbf{dual quasi-triangular structure} on $H$ \cite{hay,kassel,lt,majid91,majid,sch} is a bilinear form $R \in \Hom(H \otimes H,\bk)$ such that the following three conditions are satisfied for $x,y,z \in H$:
\begin{enumerate}
\item
The bilinear form $R$ is invertible under the convolution product.  In other words, there exists a bilinear form $R^{-1} \in \Hom(H \otimes H,\bk)$ such that
\begin{equation}
\label{eq:Rinvertible}
\sum_{(x)(y)} R(x' \otimes y')R^{-1}(x'' \otimes y'') = \varepsilon(x)\varepsilon(y) = \sum_{(x)(y)} R^{-1}(x' \otimes y')R(x'' \otimes y'').
\end{equation}
%where $\epsilon(x \otimes y) = \varepsilon(x)\varepsilon(y)$.
\item
The multiplication $\mu$ is almost commutative in the sense that
\begin{equation}
\label{eq:quasicom}
\sum_{(x)(y)} y'x'R(x'' \otimes y'') = \sum_{(x)(y)} R(x' \otimes y')x''y''.
\end{equation}
\item
We have
\begin{subequations}
\label{eq:R123}
\begin{align}
R(xy \otimes z) & = \sum_{(z)} R(x \otimes z')R(y \otimes z'')\label{eq:R13R23},\\
R(x \otimes yz) & = \sum_{(x)} R(x' \otimes z)R(x'' \otimes y).\label{eq:R13R12}
\end{align}
\end{subequations}
\end{enumerate}
We call $(H,R)$ a \textbf{dual quasi-triangular bialgebra}, which is also known as a \textbf{cobraided bialgebra}.  The dual quasi-triangular structure $R$ is also called a \textbf{universal $R$-form}.

Note that \eqref{eq:Rinvertible} is dual to the invertibility of the quasi-triangular structure.  Likewise, \eqref{eq:quasicom} is dual to \eqref{eq:almostcocom}, and \eqref{eq:R123} is dual to \eqref{eq:R}.  The reader is referred to the references above for more detailed discussions and examples of dual quasi-triangular bialgebras.

In the context of Theorem ~\ref{thm:dqtb}, the coassociativity of the $H$-comodule structure map $\rho \colon M \to H \otimes M$ can be expressed as the equality
\begin{equation}
\label{eq:coass}
\sum_{(u)(u_M)} u_H \otimes (u_M)_H \otimes (u_M)_M = \sum_{(u)(u_H)} u_H' \otimes u_H'' \otimes u_M
\end{equation}
for $u \in M$.  The notations were specified in \S \ref{subsec:notations}.  Likewise, the $H$-linearity of the $H$-comodule morphism $\alpha \colon M \to M$ is equivalent to the equality
\begin{equation}
\label{eq:star}
\sum_{(u)} u_H \otimes \alpha(u_M) = \sum_{(\alpha(u))} \alpha(u)_H \otimes \alpha(u)_M
\end{equation}
for $u \in M$.

% Theorem
\begin{proof}[Proof of Theorem ~\ref{thm:dqtb}]
Let $u,v$, and $w$ be generic elements in $M$.  First, the following calculation shows that $B^R$ \eqref{eq:B^R} is a morphism of Hom-modules:
\[
\begin{split}
B^R(\alpha(u) \otimes \alpha(v))
& = \sum_{(\alpha(u))(\alpha(v))} R(\alpha(v)_H \otimes \alpha(u)_H) \alpha(v)_M \otimes \alpha(u)_M\\
& = \sum_{(u)(v)} R(v_H \otimes u_H) \alpha(v_M) \otimes \alpha(u_M) \quad \text{by \eqref{eq:star}}\\
& = \alpha^{\otimes 2}(B^R(u \otimes v)).
\end{split}
\]

To see that $B^R$ satisfies the HYBE \eqref{eq:HYBE} for $(M,\alpha)$, we write $\gamma = u \otimes v \otimes w$.  By Lemmas ~\ref{lem1:dqtb} and ~\ref{lem2:dqtb} below, both $((B^R \otimes \alpha) \circ (\alpha \otimes B^R) \circ (B^R \otimes \alpha))(\gamma)$ and $((\alpha \otimes B^R) \circ (B^R \otimes \alpha) \circ (\alpha \otimes B^R))(\gamma)$ are equal to
\begin{equation}
\label{dagga3}
\sum_{\substack{(v)(v_H)(\alpha(w))\\(\alpha(w)_H)(\alpha(u))}} R\left(v_H'\alpha(w)_H' \otimes \alpha(u)_H\right) R\left(\alpha(w)_H'' \otimes v_H''\right) \cdot \alpha(w)_M \otimes \alpha(v_M) \otimes \alpha(u)_M.
\end{equation}
Therefore, it suffices to prove the following two Lemmas.
\end{proof}

\begin{lemma}
\label{lem1:dqtb}
With the notations above, $((B^R \otimes \alpha) \circ (\alpha \otimes B^R) \circ (B^R \otimes \alpha))(\gamma)$ is equal to \eqref{dagga3}.
\end{lemma}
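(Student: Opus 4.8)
The plan is to prove Lemma~\ref{lem1:dqtb} by a direct, bookkeeping-heavy computation, unwinding the composite $(B^R \otimes \alpha) \circ (\alpha \otimes B^R) \circ (B^R \otimes \alpha)$ applied to $\gamma = u \otimes v \otimes w$ one factor at a time, and then using the defining axioms of a dual quasi-triangular structure together with coassociativity to massage the result into the symmetric form \eqref{dagga3}. First I would compute $(B^R \otimes \alpha)(\gamma)$: by definition of $B^R$ this is $\sum_{(u)(v)} R(v_H \otimes u_H)\, v_M \otimes u_M \otimes \alpha(w)$. Next, applying $\alpha \otimes B^R$, the first tensor factor gets hit by $\alpha$ and the last two factors $u_M \otimes \alpha(w)$ get fed into $B^R$, producing a factor $R((\alpha(w))_H \otimes (u_M)_H)$ and swapping the components; one then applies $B^R \otimes \alpha$ a final time to the outer two factors. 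At each stage I would immediately invoke \eqref{eq:star} to move the $\alpha$'s past the comodule structure maps, so that all coproduct-of-comultiplication bookkeeping is done on $\alpha(w)$, $v$, and $\alpha(u)$ rather than on expressions like $\alpha(v_M)$.

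The core of the argument is then purely formal manipulation of the resulting triple (or higher) sum of $R$-values. After the three applications of $B^R$ one obtains a sum over several Sweedler indices of a product of three $R$-factors times the tensor $\alpha(w)_M \otimes \alpha(v_M) \otimes \alpha(u)_M$ (matching the target). The two $R$-axioms \eqref{eq:R13R23} and \eqref{eq:R13R12} let me combine or split these factors: specifically I expect that one pair of $R$-factors of the form $R(\,\cdot\, \otimes \alpha(u)_H')\,R(\,\cdot\,\otimes\alpha(u)_H'')$ collapses, via \eqref{eq:R13R12} read backwards, into a single $R(v_H'\alpha(w)_H' \otimes \alpha(u)_H)$, while the coassociativity relation \eqref{eq:coass} is used to re-associate the iterated comultiplications on $v_H$ and $\alpha(w)_H$ so that the indices line up with those appearing under the summation in \eqref{dagga3}. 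I would carry out this reduction step by step, each time citing the precise displayed equation being used, until the expression is literally \eqref{dagga3}.

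The main obstacle I anticipate is purely notational rather than conceptual: keeping the Sweedler indices straight through three nested applications of $B^R$, each of which introduces a new comultiplication on a component that is itself already a ``primed'' piece of an earlier comultiplication. The temptation to collapse notation too early must be resisted, since the axioms \eqref{eq:R13R23} and \eqref{eq:R13R12} only apply when the arguments are in exactly the right form; getting to that form requires a careful, and in places non-obvious, use of coassociativity \eqref{eq:coass} and of the coassociativity of the bialgebra comultiplication. A secondary subtlety is that one must be sure all the $\alpha$'s land on the correct tensor factors, which is where \eqref{eq:star} (the $H$-colinearity of $\alpha$) is essential and must be applied consistently. Once the indices are organized correctly, no deep input beyond the dual quasi-triangular axioms is needed — and indeed the companion Lemma~\ref{lem2:dqtb} for the other composite should follow by an entirely parallel computation, reflecting the symmetry of the HYBE.
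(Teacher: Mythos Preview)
Your approach matches the paper's: compute the triple composite directly, then simplify via coassociativity \eqref{eq:coass}, the $H$-colinearity \eqref{eq:star} of $\alpha$, and one $R$-axiom to merge two of the three $R$-factors into the single factor $R(v_H'\alpha(w)_H' \otimes \alpha(u)_H)$. Two small corrections: that merge uses \eqref{eq:R13R23} (the axiom summing over the comultiplication of the \emph{second} argument), not \eqref{eq:R13R12} as you wrote; and your closing remark that Lemma~\ref{lem2:dqtb} is ``entirely parallel'' is optimistic --- the other composite genuinely requires the almost-commutativity axiom \eqref{eq:quasicom}, which plays no role here.
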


\begin{proof}
A direct computation shows that $((B^R \otimes \alpha) \circ (\alpha \otimes B^R) \circ (B^R \otimes \alpha))(\gamma)$ is equal to
\begin{multline}
\label{ddagga}
\sum_{\substack{(u)(v)(u_M)\\ (\alpha(w))(\alpha(v_M))(\alpha(w)_M)}} R(v_H \otimes u_H) R(\alpha(w)_H \otimes (u_M)_H) R((\alpha(w)_M)_H \otimes \alpha(v_M)_H) \cdot \\(\alpha(w)_M)_M \otimes \alpha(v_M)_M \otimes \alpha((u_M)_M).
\end{multline}
Using \eqref{eq:coass} on both $u$ and $\alpha(w)$, \eqref{ddagga} becomes
\begin{multline}
\label{ddagga2}
\sum_{\substack{(u)(u_H)(v)(\alpha(v_M))\\(\alpha(w))(\alpha(w)_H)}} R(v_H \otimes u_H') R(\alpha(w)_H' \otimes u_H'') R(\alpha(w)_H'' \otimes \alpha(v_M)_H) \cdot\\
\alpha(w)_M \otimes \alpha(v_M)_M \otimes \alpha(u_M).
\end{multline}
Next, using \eqref{eq:R13R23} with $z = u_H$, \eqref{ddagga2} becomes
\begin{equation}
\label{ddagga2'}
\sum_{\substack{(u)(v)(\alpha(v_M))\\(\alpha(w))(\alpha(w)_H)}} R(v_H\alpha(w)_H' \otimes u_H) R(\alpha(w)_H'' \otimes \alpha(v_M)_H) \cdot
\alpha(w)_M \otimes \alpha(v_M)_M \otimes \alpha(u_M).
\end{equation}
Now using \eqref{eq:star} on both $u$ and $v_M$, \eqref{ddagga2'} becomes
\begin{equation}
\label{ddagga2''}
\sum_{\substack{(v)(v_M)(\alpha(u))\\(\alpha(w))(\alpha(w)_H)}} R(v_H\alpha(w)_H' \otimes \alpha(u)_H) R(\alpha(w)_H'' \otimes (v_M)_H) \cdot \alpha(w)_M \otimes \alpha((v_M)_M) \otimes \alpha(u)_M.
\end{equation}
Finally, using \eqref{eq:coass} on $v$, one observes that \eqref{ddagga2''} is equal to \eqref{dagga3}.
\end{proof}

\begin{lemma}
\label{lem2:dqtb}
With the notations above, $((\alpha \otimes B^R) \circ (B^R \otimes \alpha) \circ (\alpha \otimes B^R))(\gamma)$ is equal to \eqref{dagga3}.
\end{lemma}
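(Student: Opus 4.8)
The plan is to prove Lemma~\ref{lem2:dqtb} by the same computational method used for Lemma~\ref{lem1:dqtb}, with one additional ingredient. First I would expand $((\alpha \otimes B^R) \circ (B^R \otimes \alpha) \circ (\alpha \otimes B^R))(\gamma)$ by applying the defining formula \eqref{eq:B^R} for $B^R$ three times, invoking the colinearity relation \eqref{eq:star} each time an $\alpha$ is moved past a coaction. Just as in the derivation of \eqref{ddagga}, this produces a sum over Sweedler indices of a product of three universal $R$-forms multiplied by an element of $M^{\otimes 3}$ of the same type as the one appearing in \eqref{ddagga}; the only difference is that the innermost twist now acts on the $v$- and $w$-slots instead of the $u$- and $v$-slots, so the iterated coactions end up nested in the opposite order.

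Next I would perform the same normalization as in the proof of Lemma~\ref{lem1:dqtb}: use \eqref{eq:star} repeatedly to bring every occurrence of $\alpha$ outside the comodule coactions; use coassociativity \eqref{eq:coass} on each of $u$, $v$, and $w$ to replace iterated coactions by a single coaction followed by a comultiplication of the $H$-component; and then apply \eqref{eq:R13R23} (equivalently \eqref{eq:R13R12}, depending on which pair one chooses to collapse) to merge two of the three universal $R$-forms. After one last use of \eqref{eq:star} to reinstate the $\alpha$'s, this leaves an expression built from two universal $R$-forms and the tensor $\alpha(w_M) \otimes \alpha(v_M) \otimes \alpha(u_M)$, having the same shape as \eqref{dagga3} but with the two $R$-forms occurring in the other order and the two-factor product sitting in the first slot of the \emph{second} $R$-form rather than of the first.

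To finish, I would reconcile this expression with \eqref{dagga3} by means of the Yang-Baxter property of a universal $R$-form. Expanding the two-factor product in each of the two expressions by \eqref{eq:R13R23} rewrites my coefficient as $R_{12} * R_{13} * R_{23}$ evaluated on the $H$-parts of the coactions of $w$, $v$, $u$, and the coefficient in \eqref{dagga3} as $R_{23} * R_{13} * R_{12}$ evaluated on the same components; these agree by the dual of the Quantum Yang-Baxter Equation \eqref{eq:QYBE}, namely $R_{12} * R_{13} * R_{23} = R_{23} * R_{13} * R_{12}$ in $\Hom(H^{\otimes 3}, \bk)$, which follows from the axioms \eqref{eq:quasicom}, \eqref{eq:R13R23}, and \eqref{eq:R13R12} in the same way that \eqref{eq:QYBE''} follows from \eqref{eq:almostcocom} and \eqref{eq:R}. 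I would establish this dual QYBE as a preliminary step.

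The part I expect to be the main obstacle is exactly this last reconciliation. The bulk of the argument is routine Sweedler-notation bookkeeping that mirrors the proof of Lemma~\ref{lem1:dqtb} almost verbatim; what is genuinely new is the observation that, unlike the right-hand side of the HYBE handled in Lemma~\ref{lem1:dqtb}, the left-hand side does not reduce directly to \eqref{dagga3} but only to its \emph{Yang-Baxter mirror}, so that one must invoke --- and first prove --- the dual QYBE to close the gap, just as the ordinary QYBE \eqref{eq:QYBE''} was needed in the quasi-triangular setting of Theorem~\ref{thm:qtb}.
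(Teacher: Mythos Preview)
Your plan is correct and follows essentially the same route as the paper. The paper's proof also begins with the direct expansion \eqref{dagga} and then normalizes via \eqref{eq:coass}, \eqref{eq:star}, and \eqref{eq:R13R12}; the only organizational difference is that where you propose to package the final reconciliation as an appeal to a separately stated dual QYBE in $\Hom(H^{\otimes 3},\bk)$, the paper performs that same computation inline by invoking the almost-commutativity axiom \eqref{eq:quasicom} directly (with $x = v_H$, $y = \alpha(u)_H$) followed by one more application each of \eqref{eq:R13R12} and \eqref{eq:R13R23}. Since the dual QYBE is precisely what one derives from \eqref{eq:quasicom} together with \eqref{eq:R123}, the two arguments are the same in substance.
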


\begin{proof}
This is similar to the proof of Lemma ~\ref{lem1:dqtb}, so we will only give a sketch.  A direct computation shows that $((\alpha \otimes B^R) \circ (B^R \otimes \alpha) \circ (\alpha \otimes B^R))(\gamma)$ is equal to
\begin{multline}
\label{dagga}
\sum_{\substack{(v)(w)(\alpha(u))(w_M)\\(\alpha(u)_M)(\alpha(v_M))}} R(w_H \otimes v_H) R((w_M)_H \otimes \alpha(u)_H) R(\alpha(v_M)_H \otimes (\alpha(u)_M)_H) \cdot\\ \alpha((w_M)_M) \otimes \alpha(v_M)_M \otimes (\alpha(u)_M)_M.
\end{multline}
One shows that \eqref{dagga} is equal to \eqref{dagga3} by performing the following steps:
\begin{enumerate}
\item use \eqref{eq:coass} on $w$;
\item use \eqref{eq:R13R12} with $x = w_H$;
\item use \eqref{eq:star} on both $w$ and $v_M$;
\item use \eqref{eq:coass} on $v$;
\item use \eqref{eq:quasicom} with $x = v_H$ and $y = \alpha(u)_H$;
\item use \eqref{eq:R13R12} with $x = \alpha(w)_H$;
\item use \eqref{eq:R13R23} with $z = \alpha(u)_H$.
\end{enumerate}
\end{proof}

%%%%%%%%%%%%%%%%%%%%%%%%%%%%%%%%%%%%%%%%%%%%%%%%%%%%%%%%%%%%%%%%
\section{Braid group representations from solutions of the HYBE}
\label{sec:braid}
%%%%%%%%%%%%%%%%%%%%%%%%%%%%%%%%%%%%%%%%%%%%%%%%%%%%%%%%%%%%%%%%

\begin{proof}[Proof of Theorem ~\ref{thm:braid}]
It suffices to prove the first assertion about the braid relations \eqref{eq:braidrelations} for the $B_i$.  The second assertion about the existence and uniqueness of $\rho^B_n$ follows from the invertibility of the $B_i$ and the presentation of $\br$ in terms of the generators $\sigma_i$ ($1 \leq i \leq n-1$) and the braid relations.

Suppose, then, $j - i > 1$ for some $i$ and $j$ with $1 \leq i,j \leq n-1$.  Using the commutativity of $B$ with $\alpha^{\otimes 2}$, we have:
\[
\begin{split}
B_i \circ B_j 
&= (\alpha^2)^{\otimes (i-1)} \otimes (B \circ \alpha^{\otimes 2}) \otimes (\alpha^2)^{\otimes (j-i-2)} \otimes (\alpha^{\otimes 2} \circ B) \otimes (\alpha^2)^{\otimes (n-j-1)}\\
&= (\alpha^2)^{\otimes (i-1)} \otimes (\alpha^{\otimes 2} \circ B) \otimes (\alpha^2)^{\otimes (j-i-2)} \otimes (B \circ \alpha^{\otimes 2}) \otimes (\alpha^2)^{\otimes (n-j-1)} = B_j \circ B_i.
\end{split}
\]
Here $\alpha^2 = \alpha \circ \alpha$.  By the symmetry of $i$ and $j$, we conclude that $B_i \circ B_j = B_j \circ B_i$ if $|i - j| > 1$.  This proves the first braid relation in \eqref{eq:braidrelations} for the $B_i$.

For the other braid relation, we use the assumption that $B$ satisfies the HYBE \eqref{eq:HYBE} and compute as follows:
\[
\begin{split}
B_i \circ B_{i+1} \circ B_i 
&= (\alpha^3)^{\otimes (i-1)} \otimes [(B \otimes \alpha) \circ (\alpha \otimes B) \circ (B \otimes \alpha)] \otimes (\alpha^3)^{\otimes (n-i-2)}\\
&= (\alpha^3)^{\otimes (i-1)} \otimes [(\alpha \otimes B) \circ (B \otimes \alpha) \circ (\alpha \otimes B)] \otimes (\alpha^3)^{\otimes (n-i-2)}\\
&= B_{i+1} \circ B_i \circ B_{i+1}.
\end{split}
\]
We have shown that the $B_i$ satisfy the second braid relation in \eqref{eq:braidrelations} as well.
\end{proof}

\begin{example}[\textbf{Braid group representations from Hom-Lie $sl(2)$}]
\label{ex:sl2}
In this Example, we illustrate how to construct a parametric family of braid group representations from a Lie algebra, using Theorems ~\ref{thm:homlie} and ~\ref{thm:braid} and Corollary ~\ref{cor:Balpha}.

Let $sl(2)$ denote the Lie algebra of $2 \times 2$ matrices with trace $0$.  A standard linear basis of $sl(2)$ consists of the matrices
   \[
   h = \begin{pmatrix} 1 & 0 \\ 0 & -1 \end{pmatrix}, \quad
   e = \begin{pmatrix} 0 & 1 \\ 0 & 0 \end{pmatrix}, \quad \text{and} \quad
   f = \begin{pmatrix} 0 & 0 \\ 1 & 0 \end{pmatrix},
   \]
which satisfy the relations:
   \[
   \lbrack h,e \rbrack = 2e, \quad
   \lbrack h,f \rbrack = -2f, \quad \text{and} \quad
   \lbrack e,f \rbrack = h.
   \]
Each non-zero scalar $\lambda \in \bk$, the ground field, gives a Lie algebra morphism $\alpha_\lambda \colon sl(2) \to sl(2)$ defined by
   \[
   \alpha_\lambda(h) = h, \quad
   \alpha_\lambda(e) = \lambda e, \quad \text{and} \quad
   \alpha_\lambda(f) = \lambda^{-1}f
   \]
on the basis elements.  One can check directly that the new bracket $[-,-]_{\alpha_\lambda}$ on (the underlying vector space of) $sl(2)$ defined by
   \[
   \lbrack h,e \rbrack_{\alpha_\lambda} = 2\lambda e, \quad
   \lbrack h,f \rbrack_{\alpha_\lambda} = -2\lambda^{-1}f, \quad \text{and} \quad
   \lbrack e,f \rbrack_{\alpha_\lambda} = h
   \]
satisfies the Hom-Jacobi identity \eqref{eq:hom-jacobi} and that $\alpha_\lambda$ is multiplicative with respect to $[-,-]_{\alpha_\lambda}$.  Therefore, $sl(2)_\lambda = (sl(2), [-,-]_{\alpha_\lambda}, \alpha_\lambda)$ is a Hom-Lie algebra, as defined in the Introduction.  This is also an instance of Example ~\ref{ex:homlie}.

Applying Theorem ~\ref{thm:homlie} to the Hom-Lie algebra $sl(2)_\lambda$, we see that $B_{\alpha_\lambda}$ \eqref{eq:Balpha} is a solution of the HYBE for $(sl(2)' = \bk \oplus sl(2), \alpha_\lambda)$.  Moreover, $\alpha_\lambda$ is invertible with inverse $\alpha_\lambda^{-1} = \alpha_{\lambda^{-1}}$.  So Corollary ~\ref{cor:Balpha} tells us that $B_{\alpha_\lambda}$ is also invertible.  Thus, the linear maps $B_i$ ($1 \leq i \leq n-1$) on $sl(2)'^{\otimes n}$ \eqref{eq:Bi} defined by $B_{\alpha_\lambda}$ and $\alpha_\lambda$ are all invertible.  It follows from Theorem ~\ref{thm:braid} that there is a unique braid group representation $\rho^\lambda = \rho_n^{B_{\alpha_\lambda}} \colon \br \to \Aut(sl(2)'^{\otimes n})$ given by $\rho^\lambda(\sigma_i) = B_i$.

Starting from the Lie algebra $sl(2)$, we have constructed an infinite, $1$-parameter family $\{\rho^\lambda\} = \{\rho^\lambda \colon \lambda \in \bk \setminus \{0\}\}$ of representations of the braid group $\br$ on $sl(2)'^{\otimes n} = (\bk \oplus sl(2))^{\otimes n}$.  Moreover, suppose $\rho \colon \br \to \Aut(sl(2)'^{\otimes n})$ denotes the braid group representation associated to the $R$-matrix $B = B_{Id}$ \eqref{eq:Balpha} on $sl(2)' = \bk \oplus sl(2)$.  Taking $\lambda = 1$, we have $\alpha_1 = Id$, $B_{\alpha_1} = B_{Id} = B$, and $\rho^1 = \rho$.  We can, therefore, think of $\{\rho^\lambda\}$ as a $1$-parameter family of deformations of the usual braid group representation $\rho$ on $sl(2)'^{\otimes n}$.

The procedure above can be summarized as follows.  Take a Lie algebra $L$, and deform it into some family of Hom-Lie algebras $\{L_\alpha = (L,[-,-]_\alpha,\alpha)\}$ along some family of Lie algebra automorphisms $\alpha$ (Example ~\ref{ex:homlie}).  Apply Theorem ~\ref{thm:homlie} to each Hom-Lie algebra $L_\alpha$ to obtain a solution $B_\alpha$ of the HYBE on $(L' = \bk \oplus L, \alpha)$.  Then apply Corollary ~\ref{cor:Balpha} and Theorem ~\ref{thm:braid} to these $B_\alpha$ to obtain a family of representations of the braid group $\br$ on $L'^{\otimes n} = (\bk \oplus L)^{\otimes n}$.

This procedure can be applied to other Lie algebras to obtain parametric families of representations of the braid group.  One can use, for example, the ($1$-parameter or multi-parameter) Hom-Lie deformations of the Lie algebra $sl(n)$, the Heisenberg algebra, the Witt algebra, and matrix Lie algebras in \cite[Section 3]{yau2}.
\qed
\end{example}

%%==============%%
%%              %%
%%  References  %%
%%              %%
%%==============%%

\end{document}